\documentclass[letterpaper]{article} 
\usepackage{aaai24}  
\usepackage{times}  
\usepackage{helvet}  
\usepackage{courier}  
\usepackage[hyphens]{url}  
\usepackage{graphicx} 
\urlstyle{rm} 
\usepackage{natbib}  
\usepackage{caption} 
\frenchspacing  
\setlength{\pdfpagewidth}{8.5in}  
\setlength{\pdfpageheight}{11in}  
%
\usepackage{algorithm}
\usepackage{algorithmic}

%
\usepackage{newfloat}
\usepackage{listings}
\DeclareCaptionStyle{ruled}{labelfont=normalfont,labelsep=colon,strut=off} 
\lstset{%
	basicstyle={\footnotesize\ttfamily},
	numbers=left,numberstyle=\footnotesize,xleftmargin=2em,
	aboveskip=0pt,belowskip=0pt,%
	showstringspaces=false,tabsize=2,breaklines=true}
\floatstyle{ruled}
\newfloat{listing}{tb}{lst}{}
\floatname{listing}{Listing}

\usepackage{multirow}

\usepackage{amssymb}
\usepackage{amsmath}
\usepackage{amsthm}

\newtheorem{lemma}{Lemma}
\newtheorem{theorem}{Theorem}

\newtheorem{definition}{Definition}
\newtheorem{example}{Example}

%
\pdfinfo{
/TemplateVersion (2024.1)
}

\setcounter{secnumdepth}{2} 

%


\title{Approximate Integer Solution Counts over Linear Arithmetic Constraints}
\author{
    Cunjing Ge\textsuperscript{\rm 1, \rm 2}
}
\affiliations{
    \textsuperscript{\rm 1}National Key Laboratory for Novel Software Technology, Nanjing University, China \\
    \textsuperscript{\rm 2}School of Artificial Intelligence, Nanjing University, China \\
    gecunjing@nju.edu.cn
}

\usepackage{bibentry}

\begin{document}

\maketitle

\begin{abstract}
Counting integer solutions of linear constraints has found interesting applications in various fields.
It is equivalent to the problem of counting lattice points inside a polytope.
However, state-of-the-art algorithms for this problem become too slow for even a modest number of variables.
In this paper, we propose a new framework to approximate the lattice counts inside a polytope
with a new random-walk sampling method.
The counts computed by our approach has been proved approximately bounded by a $(\epsilon, \delta)$-bound.
Experiments on extensive benchmarks show that our algorithm could solve polytopes with dozens of dimensions,
which significantly outperforms state-of-the-art counters.
\end{abstract}

\section{Introduction}

As one of the most fundamental type of constraints,
linear constraints (LCs) have been studied thoroughly in many areas.
In this paper, we consider the problem of counting approximately the number of integer solutions of a set of LCs.
This problem has many applications,
such as counting-based search\,\cite{ZanariniP07,Pesant16}, simple temporal planning~\cite{HuangLOB18},
probabilistic program analysis\,\cite{GeldenhuysDV12,LuckowPDFV14}, etc..
It also includes as a special case several combinatorial counting problems that have been studied, like that
of estimating the permanent of a matrix\,\cite{JerrumS89,GamarnikK10,HarviainenRK21},
the number of contingency tables\,\cite{CryanDGJM02,DesalvoZ20},
solutions to knapsack problems\,\cite{DyerFKKPV93}, etc..
Moreover, it can be incorporated as a subroutine for \#SMT\,(LA)\,\cite{GeMZZ18}.
Since a set of LCs represents a convex polytope,
its integer solutions correspond to lattice points inside the polytope.
Accordingly, we do not distinguish the concepts of polytopes and sets of LCs in this paper.

It is well-known that counting lattice points in a polytope is \#P-hard\,\cite{Valiant79}.
On the implementation front, the first practical tool for lattice counting is \textsc{LattE}~\cite{LoeraHTY04},
which is an implementation of Barvinok's algorithm\,\cite{Barvinok93a,Barvinok94}.
The tool \textsc{barvinok}\,\cite{VerdoolaegeSBLB07} is the successor of \textsc{LattE}
with an in general better performance.
In practice, it often still has difficulties when the number of variables is greater than $10$ (preventing many applications).
The relation between the number of lattice points inside a polytope and the volume of a polytope has been studied for approximate integer counting\,\cite{GeMMZHZ19}.
However, it is inevitable that the approximation bounds may far off from exact counts.
A more recent work\,\cite{GeB21} introduced factorization preprocessing techniques to reduce polytopes dimensionality,
which are orthogonal to lattice counting,
they also require polytopes in specific forms.

An algorithm for sampling lattice points in a polytope was introduced in\,\cite{KannanV97},
which can be used to approximate the integer solution count,
though we are not aware of any implementation.
Since then, there have been a lot of works about sampling real points, such as Hit-and-run method\,\cite{Lovasz99,LovaszV06b}, and approximating polytopes' volume\,\cite{LovaszD12,CousinsV15,CousinsV18}.
As a result, the state-of-the-art volume approximation algorithms could solve general polytopes around $100$ dimensions.
Naturally, we wonder if they could be extended to integer cases.

The primary contribution of this paper is a novel approximate lattice counting algorithm,
in detail, it includes new methods with theoretical results as follows.
\begin{itemize}
\item A lattice sampling method is introduced,
    which is a combination of Hit-and-run random walk and rejection sampling.
    We proved that it generates samples in distribution limited by Hit-and-run method, which is nearly uniform.
\item A dynamic stopping criterion is proposed, which could be calculated by variance of approximations while running.
    We proved that errors of outputs approximately lie in $[1-\epsilon, 1+\epsilon]$ with probability at least $1 - \delta$,
    given $\epsilon, \delta$.
\end{itemize}
We evaluated our algorithm on an extensive set of random and application benchmarks.
We not only compared our tool with integer counters, but also with \#SAT counters by
translating LCs into propositional logic formulas.
Experimental results show that our approach scales to polytopes up to $80$ dimensions, which significantly outperforms the state-of-the-art counters.
We also observe that counts computed by our algorithm are bounded well by theoretical guarantees.


\section{Background}\label{sect:bg}
In this section, we first present definitions of notations,
and then briefly describe the sampling and volume approximation algorithms which inspired us.

\subsection{Notations and Preliminaries}\label{sect:prelim}

\begin{definition}
A linear constraint is an inequality of the form $a_1x_1 + \dots + a_nx_n\ \mathsf{op}\ b$,
where $x_i$ are numeric variables, $a_i$ are constant coefficients, and $\mathsf{op} \in \{ <, \le, >, \ge, = \}$.
\end{definition}

Without loss of generality, a set of linear constraints can be written in the form of: $\{A\vec{x} \le \vec{b}\}$,
where $A$ is a $m \times n$ coefficient matrix and $\vec{b}$ is a $1 \times n$ constant vector.
In the view of geometry, a linear constraint is a halfspace,
and a set of linear constraints is an $n$-dimensional polytope.

\begin{definition}
An $n$-dimensional polytope is in the form of
$$P = \{\vec{x}\in \mathbb{R}^n: A \vec{x} \le \vec{b}\}.$$
\end{definition}

Naturally, $\mathbb{Z}^n$ represents the set of all integer points (points with all integer coordinates).
Thus integer models of the linear constraints can be represented by $\{\vec{x} \in \mathbb{Z}^n : A \vec{x} \le \vec{b}\}$.
It is the same as the integer points inside the corresponding polytope, i.e.,
$$\{\vec{x} \in \mathbb{Z}^n : A \vec{x} \le \vec{b}\} = P \cap \mathbb{Z}^n.$$
In this paper, we assume that the polytopes are bounded, i.e., finite number of integer solutions,
otherwise, it can be easily detected via Integer Linear Programming (ILP).
Note that in our experiments, the running time of ILP is usually negligible compared to that of the integer counting.

\begin{definition}
More notations:
\begin{itemize}
\item Let $A = (\vec{A_1}, ..., \vec{A_m})^T$ and $\mathbf{h_i} = \vec{A_i}\vec{x} \le b_i$,
given $P = \{A\vec{x} \le \vec{b}\}$, i.e., $P = \mathbf{h_1} \cap ... \cap \mathbf{h_m}$.
\item Let $\text{Vol}(K)$ denote the volume of a given convex set $K$, which is the Lebesgue measure of $|K|$ in Euclidian space.
\item Let $C(\vec{x})$ denote the unit cube centered at $\vec{x}$.
\item Let $B(\vec{x}, r)$ denote the ball centered at $\vec{x}$, of radius $r$.
\end{itemize}
\end{definition}

\subsection{Hit-and-run Method}\label{sect:hr}

Hit-and-run random walk method was first introduced in \cite{BerbeeBKSST87},
whose limiting distribution is proved to be uniform.
It was employed and improved for volume approximation by \cite{Lovasz99,LovaszV06b}.
Experiments\,\cite{GeMZZ18} showed that a variation called Coordinate Directions Hit-and-run is more efficient in practice.
Thus we also adopt this variation, which is called Hit-and-run for short in the rest of paper.
It samples a real point from $\vec{p}$ in a given convex body $K$ by the following steps:
\begin{itemize}
\item Select a direction from $n$ coordinates uniformly.
\item Generate the line $l$ through $\vec{p}$ with above direction.
\item Pick a next point $\vec{p'}$ uniformly from $l \cap K$.
\item Start from $p'$ and repeat above steps $w$ times.
\end{itemize}
Earlier works\,\cite{LovaszV06b} proved that Hit-and-run method mixes in $w = O(n^2)$ steps for a random initial point and $O(n^3)$ steps for a fixed initial point.
However, further numerical studies\,\cite{LovaszD12,GeMZZ18} reported that $w = n$ is sufficient for nearly uniformly sampling in polytopes with dozens of dimensions.

\subsection{Multiphase Monte-Carlo Algorithm}\label{sect:mmc}

Multiphase Monte-Carlo Algorithm (MMC) is a polynomial time randomized algorithm, which was first introduced in \cite{DyerFK91}.
At first, the complexity is $O^*(n^{23})$\footnote{The ``soft-O'' notation $O^*$ indicates that
factors of $\log n$ and factors depending on other parameters like $\epsilon$ are suppressed.},
it was reduced to $O^*(n^3)$ by a series of works\,\cite{Lovasz99,LovaszV06a, CousinsV18}.
It consists of the following steps:
\begin{itemize}
\item Employ an Ellipsoid method to obtain an affine transformation $T$, s.t.,
    $B(\vec{0}, 1) \subset T(P) \subset B(\vec{0}, \rho)$, given a $\rho > n$.
    Note that $\text{Vol}(P) = \text{Vol}(T(P)) \cdot \det(T)$.
\item Construct a series of convex bodies $K_i = T(P) \cap B(\vec{0}, 2^{i/n})$, $i = 0, ..., l$,
    where $l = \lceil n \log_2 \rho \rceil$.
    Then $$\text{Vol}(T(P)) = \text{Vol}(K_l) = \text{Vol}(K_0) \cdot \prod_{i=0}^{l-1} \frac{\text{Vol}(K_{i+1})}{\text{Vol}(K_i)}.$$
    Specifically, $K_0 = B(\vec{0}, 1)$ and $K_l = T(P)$.
\item Generate a set $S_i$ of sample points by Hit-and-run in $K_{i+1}$, where $|S_i| = f(l, \epsilon, \delta)$.
    Then count $|K_i \cap S_i|$ and use $r_i = \frac{|K_i \cap S_i|}{|S_i|}$ to approximate the ratio $\frac{\text{Vol}(K_{i+1})}{\text{Vol}(K_i)}$.
\item At last, $\text{Vol}(P) \approx \text{Vol}(B(\vec{0},1)) \cdot \prod_{i=0}^{l-1} r_i \cdot \det(T)$.
\end{itemize}

Note that the function $f(l, \epsilon, \delta)$ determines the number of samples with given $\epsilon$, $\delta$,
s.t., relative errors of outputs are bounded in $[1 - \epsilon, 1 + \epsilon]$ with probability at least $1 - \delta$.

\section{Algorithm}\label{sect:alg}

To apply MMC framework and Hit-and-run random walk on lattice counting problem,
there are some difficulties:
\begin{itemize}
\item How to efficiently sampling lattice points nearly uniformly inside a polytope?
\item How to construct a chain of polytopes and then approximate ratios among them like MMC?
\item How many sample points are sufficient, given $\epsilon$, $\delta$? Could relative errors be computed while algorithm running?
\end{itemize}
In this section, we will propose new algorithms to answer above questions, with theoretical analysis.

\subsection{Lattice Sampling}

To sampling lattice points in a given polytope $P$,
we apply Hit-and-run random walk method with rejection sampling.

Intuitively, a real point $\vec{p} = (p_1, ..., p_n)$ corresponds to a lattice point $\vec{[p]} = ([p_1], ..., [p_n])$.
So lattice points can be generated by Hit-and-run method and number rounding, noted $[\,.\,]$.
However, the distribution of lattices generated by sampling real points directly in $P$ is not uniform.
Because the probability of sampling a lattice point $\vec{u}$ closed to polytopes' facets may be smaller than a point $\vec{v}$ which $C(\vec{v}) \subset P$.

\begin{example}
In Figure\,\ref{fig:shift}, the probability of a blue point picked by sampling directly in $P$, is smaller than a red point.
Now let us consider shifting $\mathbf{c}$ to $\mathbf{l_1}$, $\mathbf{l_2}$ and $\mathbf{l_3}$.
Note that $C(u_3) \subset \mathbf{a} \cap \mathbf{b} \cap \mathbf{l_2} \subset \mathbf{a} \cap \mathbf{b} \cap \mathbf{l_3}$,
but $C(u_3) \not\subset \mathbf{a} \cap \mathbf{b} \cap \mathbf{l_1}$.
Then the probability of picking $u_3$ by sampling real points in $\mathbf{a} \cap \mathbf{b} \cap \mathbf{l_2}$ or $\mathbf{a} \cap \mathbf{b} \cap \mathbf{l_3}$ is the same as red points.
\end{example}

\begin{figure}[htb]
	\centering
	\includegraphics[width=0.28\textwidth]{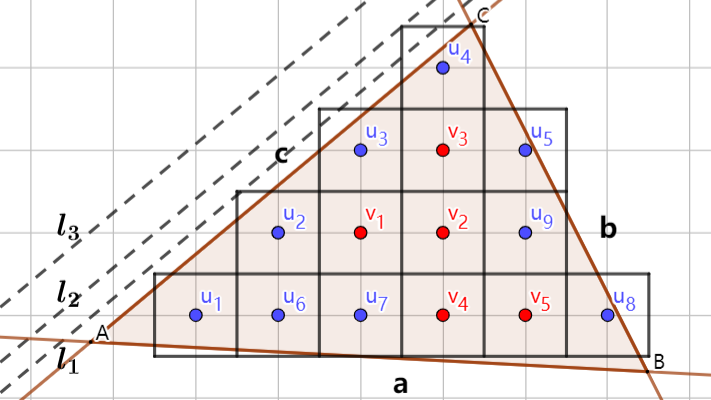}
	\caption{An illustration of shifting facets. Here $P = \triangle ABC = \mathbf{a} \cap \mathbf{b} \cap \mathbf{c}$,
        where $\mathbf{a}$, $\mathbf{b}$ and $\mathbf{c}$ are inequalities (half-spaces) correspond to $AB$, $BC$, $AC$, respectively.
        Inequalities $\mathbf{l_1}, \mathbf{l_2}, \mathbf{l_3}$ are parallel to $\mathbf{c}$.
        Red points $\{v_1,...,v_5\}$ and blue points $\{u_1,...,u_9\}$ are lattice points in $P$ s.t.
        $C(v_i) \subset P$ and $C(u_i) \not\subset P$ respectively.
    }\label{fig:shift}
\end{figure}

Therefore, our approach first enlarges $P$ to $P'$ by shifting facets of $P$.
Then it repeatedly generates real points in $P'$ and rejects samples whose corresponding lattice points outside $P$.
Obviously, the larger $P'$,  the larger probability of rejection.
Now we have a further question:
\begin{itemize}
\item How to obtain such $P'$ as small as possible?
\end{itemize}

Naturally, $P'$ should contain all unit cubes centered at lattice points in $P$, i.e.,
$$C(\vec{p}) \subset P', \forall \vec{p} \in P \cap \mathbb{Z}^n.$$
Without loss of generality, let us consider shifting the $i$th facet $\vec{A_i}\vec{x} \le b_i$.
The hyperplane shifting problem is equivalent to the following optimization problem
\begin{align*}
& \min b_i'\ \ \mathrm{s.t.}\ C(\vec{p}) \subset \vec{A_i}\vec{x} \le b_i', \forall \vec{p} \in P \cap \mathbb{Z}^n. \\
\Leftrightarrow & \max b_i'\ \ \mathrm{s.t.}\ [C(\vec{p}) \cap \vec{A_i}\vec{x} = b_i'] \ne \emptyset, \exists \vec{p} \in P \cap \mathbb{Z}^n. \\
\Leftrightarrow & \max \vec{A_i}\vec{x}\ \ \mathrm{s.t.}\ \vec{x} \in \bigcup_{\vec{p} \in P \cap \mathbb{Z}^n} C(\vec{p}), \vec{x} \in \mathbb{R}^n.
\end{align*}
In the worst case, assume there is a lattice point $\vec{q}$ on the $i$th facet of $P$, i.e., $\vec{A_i}\vec{q} = b_i$.
Then we have
\begin{align}
\Leftrightarrow & \max \vec{A_i}\vec{x}\ \ \mathrm{s.t.}\ \vec{x} \in C(\vec{q}), \vec{x} \in \mathbb{R}^n. \nonumber \\
\Leftrightarrow &\ \ b_i + \max \vec{A_i}\vec{x}\ \ \mathrm{s.t.}\ \vec{x} \in C(\vec{0}), \vec{x} \in \mathbb{R}^n. \label{eqn:shiftopt}
\end{align}
The optimization problem of Equation\,(\ref{eqn:shiftopt}) can be solved by Linear Programming (LP), e.g., Simplex algorithm.

\begin{algorithm}[htb]
\caption{Sample() -- Sample $s$ lattice points in $P$}
\label{alg:sample}
\textbf{Input}: $P$, $s$ \\
\textbf{Parameter}: $w$ \\
\textbf{Output}: $S$
\begin{algorithmic}[1] 
\FOR {\textbf{each} $\vec{A_i} \vec{x} \le b_i$ in $P$}
    \STATE $v_i \leftarrow$ Simplex($\max \vec{A_i}\vec{x}\ \ \mathrm{s.t.}\ \{-\frac{1}{2} \le x_i \le \frac{1}{2}\}$)
\ENDFOR
\STATE $P' \leftarrow$ $\{A\vec{x} \le \vec{b} + \vec{\mathit{v}}\}$
\STATE $T \leftarrow$ Ellipsoid($P'$)
\STATE $\vec{p} \leftarrow \vec{0}$, $S \leftarrow \emptyset$
\STATE {\textbf{repeat} $s$ \textbf{times}}
\begin{ALC@g}
    \STATE \textbf{do}
    \begin{ALC@g}
        \STATE $\vec{p} \leftarrow$ HitAndRun($T(P')$, $\vec{p}$, $w$)
        \STATE $\vec{q} \leftarrow [T^{-1}(\vec{p})]$
    \end{ALC@g}
    \STATE \textbf{while} $\vec{q} \not\in P$
    \STATE $S \leftarrow S \cup \{\vec{q}\}$
\end{ALC@g}
\STATE \textbf{end repeat}
\end{algorithmic}
\end{algorithm}

Algorithm\,\ref{alg:sample} is the pseudocode of our sampling method.
It first enlarges $P$ to $P'$ by the shifting method.
Next it applies the Shallow-$\beta$-Cut Ellipsoid method on $P'$ which is the same as MMC.
It obtains an affine transformation $T$ such that $B(0, 1) \subset T(P') \subset B(0, 2n)$.
Then it samples a lattice point $\vec{q}$ by $[T^{-1}(\vec{p})]$,
where $\vec{p}$ is a real sample point generated by Hit-and-run in $T(P')$, and $T^{-1}$ is the inverse transformation of $T$.
The algorithm only accepts samples inside $P$.
At last it repeats above steps till $|S| = s$.
The parameter $w$ will be discussed later in Section\,\ref{sect:imple}.

Why we adopt an affine transformation $T$ before random walks?
Intuitively, it could transform a `thin' polytope $P'$ into is a well-rounded one $T(P')$.
Thus it is easier for Hit-and-run walks to get out of corners.

The following results show that Algorithm\,\ref{alg:sample} generates lattice sample points in nearly uniform.
Recall that $\text{Vol}(P) = \text{Vol}(T(P)) \cdot \det(T)$.

\begin{lemma}\label{lem:accept}
The probability of acceptance is $\frac{|P \cap \mathbb{Z}^n|}{\text{Vol}(P')}$,
if \textnormal{Hit-and-run} is a uniform sampler.
\end{lemma}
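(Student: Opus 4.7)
The plan is to show that when Hit-and-run samples uniformly, the real sample point $T^{-1}(\vec{p})$ is distributed uniformly in $P'$, and then compute directly the measure of the accepting region as a union of disjoint unit cubes.

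First, I would reduce the problem to the original (untransformed) space. Since $T$ is an affine bijection and $\vec{p}$ is uniform on $T(P')$, the preimage $\vec{x} := T^{-1}(\vec{p})$ is uniform on $P'$. Hence acceptance is equivalent to the event that a uniformly drawn $\vec{x} \in P'$ satisfies $[\vec{x}] \in P \cap \mathbb{Z}^n$.

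Next, I would rewrite this event geometrically. By definition of the rounding $[\,\cdot\,]$, we have $[\vec{x}] = \vec{q}$ exactly when $\vec{x} \in C(\vec{q})$ (up to a measure-zero boundary from half-integer coordinates). Therefore,
\begin{equation*}
\Pr[\text{accept}] \;=\; \frac{\text{Vol}\bigl(P' \cap \bigcup_{\vec{q} \in P \cap \mathbb{Z}^n} C(\vec{q})\bigr)}{\text{Vol}(P')}.
\end{equation*}
By the shifting construction of $P'$ justified via Equation~(\ref{eqn:shiftopt}), each unit cube $C(\vec{q})$ with $\vec{q}\in P\cap\mathbb{Z}^n$ is fully contained in $P'$, so the intersection with $P'$ in the numerator may be dropped. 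The cubes $\{C(\vec{q})\}_{\vec{q}\in \mathbb{Z}^n}$ are pairwise disjoint up to a measure-zero set (faces where coordinates are half-integers), and each has unit volume, so
\begin{equation*}
\text{Vol}\Bigl(\bigcup_{\vec{q} \in P \cap \mathbb{Z}^n} C(\vec{q})\Bigr) \;=\; \sum_{\vec{q}\in P\cap\mathbb{Z}^n} \text{Vol}(C(\vec{q})) \;=\; |P \cap \mathbb{Z}^n|.
\end{equation*}
Substituting yields the claimed probability $|P\cap\mathbb{Z}^n|/\text{Vol}(P')$.

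The argument is essentially bookkeeping; the only substantive step is the containment $C(\vec{q})\subset P'$, which is exactly what the facet-shifting LP in Algorithm~\ref{alg:sample} was designed to guarantee, so I expect no real obstacle beyond citing that construction and noting that uniformity of $\vec{p}$ on $T(P')$ transfers to uniformity of $\vec{x}$ on $P'$ because $T$ is a fixed affine bijection independent of the sample.
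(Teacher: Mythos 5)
Your proof is correct and follows essentially the same route as the paper's: both identify the acceptance event with the union of unit cubes centered at the lattice points of $P$ and divide its volume by $\text{Vol}(P')$, the only cosmetic difference being that you pull everything back to the untransformed space via the affine bijection $T^{-1}$ while the paper works in $T(P')$ and cancels the $\det(T)$ factors explicitly. Your explicit appeal to the containment $C(\vec{q})\subset P'$ from the facet-shifting LP and to the measure-zero overlap of cube boundaries makes the same argument slightly more careful, but it is not a different proof.
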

\begin{proof}
Assume $\vec{x}$ is generated by Hit-and-run method. Then
\begin{align*}
& \text{Prob}(\vec{x}\ \mathrm{accepted}) = \text{Prob}([T^{-1}(\vec{x}))] \in P) \\
= & \text{Prob}(\vec{x} \in \cup_{\vec{p} \in P \cap \mathbb{Z}^n} T(C(\vec{p}))) = \frac{\text{Vol}(\cup_{\vec{p} \in P \cap \mathbb{Z}^n} T(C(\vec{p})))}{\text{Vol}(T(P'))} \\
= & \frac{\sum_{\vec{p} \in P \cap \mathbb{Z}^n} \text{Vol}(C(\vec{p})) / \det(T)}{\text{Vol}(P') / \det(T)} = \frac{|P \cap \mathbb{Z}^n|}{\text{Vol}(P')}. \qedhere
\end{align*}
\end{proof}

\begin{theorem}\label{lem:uniform}
Each point $\vec{x} \in P \cap \mathbb{Z}^n$ gets picked with the same probability,
if \textnormal{Hit-and-run} is a uniform sampler.
\end{theorem}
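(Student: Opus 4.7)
The plan is to compute directly the marginal probability that the algorithm outputs a fixed lattice point $\vec{q} \in P \cap \mathbb{Z}^n$, show this marginal is the same for every such $\vec{q}$, and then combine with Lemma \ref{lem:accept} to conclude. The key observation is that the rounding map $[T^{-1}(\cdot)]$ sends a sample $\vec{p} \in T(P')$ to $\vec{q}$ exactly when $\vec{p} \in T(C(\vec{q}))$, and by the construction of $P'$ (shifting each facet outward by the optimum of \eqref{eqn:shiftopt}) the set $C(\vec{q})$ lies inside $P'$ for every $\vec{q} \in P \cap \mathbb{Z}^n$, so $T(C(\vec{q})) \subseteq T(P')$.

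First, fix any $\vec{q} \in P \cap \mathbb{Z}^n$ and consider a single trial of the do-while loop. Since Hit-and-run is assumed to be a uniform sampler on $T(P')$, the probability that the drawn point $\vec{p}$ falls into $T(C(\vec{q}))$ is
\begin{equation*}
\frac{\text{Vol}(T(C(\vec{q})))}{\text{Vol}(T(P'))} \;=\; \frac{\text{Vol}(C(\vec{q}))/\det(T)}{\text{Vol}(P')/\det(T)} \;=\; \frac{1}{\text{Vol}(P')},
\end{equation*}
where I used $\text{Vol}(C(\vec{q})) = 1$ and the volume-scaling convention $\text{Vol}(P) = \text{Vol}(T(P)) \cdot \det(T)$ already recalled in the paper. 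Crucially, this expression does not depend on $\vec{q}$: every lattice point inside $P$ has the same per-trial hit probability $1/\text{Vol}(P')$, because the unit cubes $C(\vec{q})$ all have the same volume and the interiors of distinct $C(\vec{q})$'s are disjoint.

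Next, I would condition on acceptance. By Lemma \ref{lem:accept} the probability that the loop terminates on a given trial is $|P \cap \mathbb{Z}^n|/\text{Vol}(P')$, so the conditional probability that the accepted output equals a specific $\vec{q}$ is
\begin{equation*}
\frac{1/\text{Vol}(P')}{|P \cap \mathbb{Z}^n|/\text{Vol}(P')} \;=\; \frac{1}{|P \cap \mathbb{Z}^n|},
\end{equation*}
which is uniform over $P \cap \mathbb{Z}^n$. Since each repetition of the outer loop is independent, this gives the theorem.

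The main obstacle is the geometric step underlying the first display: one must verify that $C(\vec{q}) \subseteq P'$ for every $\vec{q} \in P \cap \mathbb{Z}^n$, so that $T(C(\vec{q}))$ lies entirely in the region being sampled and its full volume counts. This is precisely what the shifting construction \eqref{eqn:shiftopt} guarantees --- each facet is pushed out by the worst-case projection of a unit cube --- so any lattice point in $P$, including those sitting on a facet, has its whole surrounding cube inside $P'$. A minor detail worth noting is that the overlaps between distinct cubes $C(\vec{q})$ have measure zero, so additivity of volume applies and the per-point probabilities do not double-count.
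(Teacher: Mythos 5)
Your proposal is correct and follows essentially the same route as the paper's proof: compute the per-trial probability of landing in $T(C(\vec{q}))$ under the uniform-sampling assumption, then divide by the acceptance probability from Lemma~\ref{lem:accept} to get $1/|P \cap \mathbb{Z}^n|$. The only difference is that you spell out the supporting facts the paper leaves implicit (that the shifting construction guarantees $C(\vec{q}) \subseteq P'$, and that distinct cubes overlap only on a null set), which is a welcome addition but not a different argument.
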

\begin{proof}
Consider an arbitrary point $\vec{x} \in P \cap \mathbb{Z}^n$.
Let $\vec{p}$ represents a real point generated by Hit-and-run in $T(P')$.
Then
\begin{align*}
& \text{Prob}(\vec{x}\ \mathrm{picked}) = \frac{\text{Prob}(\vec{p} \in T(C(\vec{x})))}{\text{Prob}(\vec{p}\ \mathrm{accepted})} \\
& \quad\quad = \frac{\text{Vol}(T(C(\vec{x})))}{\text{Vol}(T(P'))} \cdot \frac{\text{Vol}(P')}{|P \cap \mathbb{Z}^n|} = \frac{1}{|P \cap \mathbb{Z}^n|}. \qedhere
\end{align*}
\end{proof}

From Lemma\,\ref{lem:accept}, we observe that the acceptance could be very small when $|P \cap \mathbb{Z}^n| \ll \text{Vol}(P')$.

\subsection{Polytopes Chain Generation}

Now we consider a chain of polytopes $\{P_0, ..., P_l\}$ s.t.
\begin{equation}\label{eqn:subdiv}
\begin{split}
& |P \cap \mathbb{Z}^n| = |P_0 \cap \mathbb{Z}^n| \cdot \prod_{i=0}^{l-1} \frac{|P_{i+1} \cap \mathbb{Z}^n|}{|P_i \cap \mathbb{Z}^n|}, \\
& \frac{|P_{i+1} \cap \mathbb{Z}^n|}{|P_i \cap \mathbb{Z}^n|} \in
\begin{cases}
[r_{min}, r_{max}] & i \le l-2,\\
[r_{min}, 1) & i = l - 1.
\end{cases}
\end{split}
\end{equation}
where $[r_{min}, r_{max}]$ bounds ratios close to $\frac{1}{2}$, like $[0.4, 0.6]$.
If ratios are close to $0$, the computational cost of generating points in $P_{i+1} \cap \mathbb{Z}^n$ by sampling in $P_i \cap \mathbb{Z}^n$ will increase.
On the other hand, $l$ will be a large number when ratios are close to $1$,
which is also not computational-wise.
Algorithm~\ref{alg:subdiv} presents our method for constructing such $P_i$s.

\begin{algorithm}[htb]
\caption{Subdivision() -- Obtain the polytopes chain}
\label{alg:subdiv}
\textbf{Input}: $P$, $\mathit{s}$ \\
\textbf{Parameter}: $r_{max}$, $r_{min}$, $\mu$ \\
\textbf{Output}: $l$, $\{P_i\}$, $\{S_i\}$
\begin{algorithmic}[1] 
\STATE $P_0 \leftarrow$ GetRect($P$)
\STATE $i \leftarrow 0$, $j \leftarrow 1$ and $S_0 \leftarrow \emptyset$
\WHILE {$j \le m$}
	\STATE $S_i \leftarrow$ Sample($P_i$, $\mathit{s}$)
    \STATE $H \leftarrow \mathbb{R}^n$
    \WHILE {$\frac{|S_i \cap H \cap \mathbf{h_j}|}{|S_i|} > r_{max}$ and $j \le m$}
        \STATE $H \leftarrow H \cap \mathbf{h_j}$, $j \leftarrow j + 1$
    \ENDWHILE
    \STATE $k \leftarrow \min(j, m)$
    \IF {$\frac{|S_i \cap H \cap \mathbf{h_k}|}{|S_i|} \ge r_{min}$}
        \STATE $H \leftarrow H \cap \mathbf{h_k}$, $j \leftarrow k + 1$
    \ELSE
        \STATE \textbf{do}\label{line:bjbeg}
        \begin{ALC@g}
            \STATE $\vec{A_k'} \leftarrow \vec{A_k}$ or Disturb($A_k$, $\mu$) since second loop
            \STATE find min $b_k'$ s.t. $\frac{|S_i \cap H \cap \vec{A_k'} \vec{x} \le b_k'|}{|S_i|} \ge r_{min}$, $b_k' \ge b_k$
        \end{ALC@g}
        \STATE \textbf{while} no feasible $b_k'$ found \label{line:bjend}
        \STATE $H \leftarrow H \cap \vec{A_k'} \vec{x} \le b_k'$
    \ENDIF
    \STATE $P_{i+1} \leftarrow$ $P_i \cap H$
	\STATE $i \leftarrow i + 1$
\ENDWHILE
\STATE \textbf{return} $i$, $\{P_0, ..., P_i\}$, $\{S_0, ..., S_i\}$
\end{algorithmic}
\end{algorithm}

Recall that in MMC, it eventually approximates the ratio between volume of $P$ and an inner ball $B(\vec{0}, 1)$ whose exact volume is easy to compute.
It constructs a series of convex body $K_i$ inside $P$.
Lemma\,\ref{lem:accept} indicates that the smaller polytope, the more difficult to sampling lattice points,
naturally, we would like to construct polytopes chain outside $P$.
Our approach starts from an $n$-dimensional rectangle $P_0 = Rect(P) \supset P$, whose exact lattice count is also easy to obtain.
Next it constructs $P_1 \supset P$ by adding new cutting constraints on $P_0$, s.t. $\frac{|P_1 \cap \mathbb{Z}^n|}{|P_0 \cap \mathbb{Z}^n|}$ close to $\frac{1}{2}$.
Then it repeatedly generates $P_1 \supset P_2 \supset ...$ until a polytope $P_l = P$ found.

\begin{itemize}
\item How to find cutting constraints to halve $P_i$s?
\end{itemize}

\begin{example}
In Figure\,\ref{fig:subdiv}, given $P = \triangle ABC = \mathbf{a} \cap \mathbf{b} \cap \mathbf{c}$,
and $P_0 = ADEF \supset P$.
Now we try to cut $P_0$ with $\mathbf{a}$, $\mathbf{b}$ and $\mathbf{c}$.
We observe that $\frac{|P_0 \cap \mathbf{a} \cap \mathbf{b} \cap \mathbb{Z}^n|}{|P_0 \cap \mathbb{Z}^n|} = \frac{10}{15} > r_{max}$
and $\frac{|P_0 \cap \mathbf{a} \cap \mathbf{b} \cap \mathbf{c} \cap \mathbb{Z}^n|}{|P_0 \cap \mathbb{Z}^n|} = \frac{4}{15} < r_{min}$.
Then we find $\mathbf{d}$ parallel to $\mathbf{c}$ s.t. $\frac{|P_0 \cap \mathbf{a} \cap \mathbf{b} \cap \mathbf{d} \cap \mathbb{Z}^n|}{|P_0 \cap \mathbb{Z}^n|} = \frac{8}{15}$.
Thus $P_1 = P_0 \cap \mathbf{a} \cap \mathbf{b} \cap \mathbf{d}$.
\end{example}

\begin{figure}[htb]
	\centering
	\includegraphics[width=0.26\textwidth]{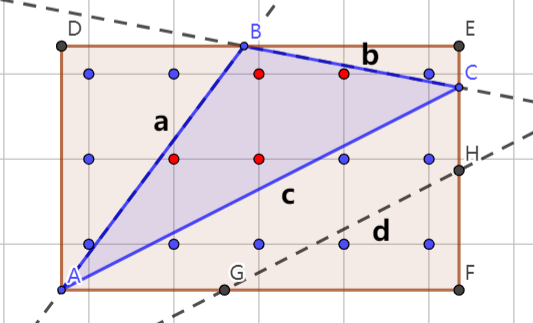}
	\caption{An example of constructing $P_0 = ADEF$, $P_1 = ABCHG$ and $P_2 = \triangle ABC = P$.}\label{fig:subdiv}
\end{figure}

Suppose that we already have $P_0 \supset ... \supset P_i$ which $P_i \subset \mathbf{h_1} \cap ... \cap \mathbf{h_{j-1}}$ and $P_i \not\subset \mathbf{h_j}$.
Then cutting constraints for constructing $P_{i+1}$ are found by the following steps:
\begin{itemize}
\item \textbf{Step 1.} Add constraints $\mathbf{h_j}$, $\mathbf{h_{j+1}}$,... repeatedly
until a $k$ is found s.t. $\frac{|P_i \cap \mathbf{h_j} \cap ... \cap \mathbf{h_k} \cap \mathbb{Z}^n|}{|P_i \cap \mathbb{Z}^n|} \le r_{max}$ or $k = m$.
\item \textbf{Step 2.} If $\frac{|P_i \cap \mathbf{h_j} \cap ... \cap \mathbf{h_k} \cap \mathbb{Z}^n|}{|P_i \cap \mathbb{Z}^n|} \ge r_{min}$,
then $P_{i+1} = P_i \cap \mathbf{h_j} \cap ... \cap \mathbf{h_k}$ has been found.
Note that $P_{i+1} = P_l= P$ when $k = m$.
\item \textbf{Step 3.} Otherwise, it indicates that $\mathbf{h_k}$ over-cuts the solution space.
Then we find an $\mathbf{h_k'} = \vec{A_k'}\vec{x} \le b_k'$ (almost) parallel to $\mathbf{h_k}$
s.t. $r_{min} \le \frac{|P_i \cap \mathbf{h_j} \cap ... \cap \mathbf{h_k'} \cap \mathbb{Z}^n|}{|P_i \cap \mathbb{Z}^n|} \le r_{max}$.
At last, let $P_{i+1} = P_i \cap \mathbf{h_j} \cap ... \cap \mathbf{h_{k-1}} \cap \mathbf{h_k'}$.
\end{itemize}
About above steps, we may naturally ask:

\begin{itemize}
\item How to determine the value of $\frac{|P_i \cap \mathbf{h_j} \cap ... \cap \mathbf{h_k} \cap \mathbb{Z}^n|}{|P_i \cap \mathbb{Z}^n|}$?
\end{itemize}

Algorithm\,\ref{alg:subdiv} samples lattice points $S_i$ in $P_i$
and then approximates $\frac{|P_i \cap \mathbf{h_j} \cap ... \cap \mathbf{h_k} \cap \mathbb{Z}^n|}{|P_i \cap \mathbb{Z}^n|}$ via $\frac{|S_i \cap \mathbf{h_j} \cap ... \cap \mathbf{h_k}|}{|S_i|}$.
Since we aim to obtain $P_{i+1}$ s.t. $\frac{|P_{i+1} \cap \mathbb{Z}^n|}{|P_i \cap \mathbb{Z}^n|}$ close to $\frac{1}{2}$,
it is not necessary to approximate very accurately with a mass of samples.

\begin{itemize}
\item How to find $\mathbf{h_k'}$ in Step 3?
\end{itemize}

Line\,\ref{line:bjbeg} to\,\ref{line:bjend} in Algorithm\,\ref{alg:subdiv} is the loop of finding $\mathbf{h_k'}$.
At the first time of loop, it sets $\vec{A_k'} = \vec{A_j}$ and searches the minimum $b_k' \ge b_k$ s.t. $\frac{|S_i \cap H \cap \vec{A_k'} \vec{x} \le b_k'|}{|S_i|} \ge r_{min}$.
We then compute and sort $D = \{d : d = \vec{A_k}\vec{p},\ \forall \vec{p}\in S_i \cap H\}$.
Thus searching $b_k'$ is equivalent to scanning $D$,
whose time complexity is $O(|D|) = O(|S_i \cap H|) = O(s)$.

Note that there may be no feasible $b_k'$,
as for certain $y$,
$\frac{|S_i \cap H \cap \vec{A_k} \vec{x} \le y|}{|S_i|} > r_{max}$,
$\frac{|S_i \cap H \cap \vec{A_k} \vec{x} <y|}{|S_i|} < r_{min}$.
For example, $|x_1 + x_2 = 0.99 \cap \mathbb{Z}^2| = 0$ and $|x_1 + x_2 = 1 \cap \mathbb{Z}^2| = \infty$.
Therefore, if our algorithm fails to find a feasible $b_k'$ once,
it will generate $\vec{A_k'} = \{a_{k1}', ..., a_{kn}'\}$ by disturbing $\vec{A_k}$,
i.e., $a_{ki}' \in [a_{ki} - \mu, a_{ki} + \mu]$,
where $\mu \in \mathbb{R}$ is a small constant.
In practice, the loop in line 13--16 (Algorithm 2) usually finds a feasible $b_k'$ by disturbing $\vec{A_k'}$ once, occasionally twice, though the loop may not stop in theory in worst cases.

With respect to the size of $l$, it is easy to find the following result as every $P_{i+1}$ is constructed by nearly halve $P_i$.

\begin{theorem}
The length $l$ of the chain $P_0, .., P_l$ constructed by Algorithm\,\ref{alg:subdiv} is in $O(\log_2 |P_0 \cap \mathbb{Z}^n|)$ in the worst case.
\end{theorem}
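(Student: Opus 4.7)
The plan is to exploit the halving invariant built into Equation~(\ref{eqn:subdiv}): each intermediate polytope contains at most a constant fraction $r_{max}<1$ of the lattice points of its predecessor, so iterating gives geometric decay of $|P_i\cap\mathbb{Z}^n|$, and the length of the chain must be logarithmic in the starting count.

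Concretely, I would first establish the one-step bound $|P_{i+1}\cap\mathbb{Z}^n|\le r_{max}\cdot|P_i\cap\mathbb{Z}^n|$ for every $i\le l-2$, which is just a restatement of the second clause of~(\ref{eqn:subdiv}). For the terminal step $i=l-1$ the chain definition only guarantees that the ratio is strictly below $1$; since $P_l=P$ this single step contributes an additive $+1$ to $l$ and does not affect the asymptotic rate. Iterating the one-step bound then yields $|P_{l-1}\cap\mathbb{Z}^n|\le r_{max}^{\,l-1}\cdot|P_0\cap\mathbb{Z}^n|$. Because the polytope is assumed bounded and nonempty (otherwise ILP terminates the procedure before it starts), $|P_l\cap\mathbb{Z}^n|\ge 1$ and in particular $|P_{l-1}\cap\mathbb{Z}^n|\ge 1$. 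Rearranging gives
\[
 l-1 \;\le\; \frac{\log |P_0\cap\mathbb{Z}^n|}{\log(1/r_{max})},
\]
which is $O(\log_2|P_0\cap\mathbb{Z}^n|)$ because $r_{max}$ is a constant strictly less than $1$ (the paper suggests $0.6$).

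The main obstacle is that the construction in Algorithm~\ref{alg:subdiv} controls the ratio only through the empirical estimate $|S_i\cap H\cap\mathbf{h}|/|S_i|$, not the true ratio, so strictly speaking the invariant in~(\ref{eqn:subdiv}) is enforced only approximately. I would address this by noting that the theorem concerns the chain \emph{as specified} by~(\ref{eqn:subdiv}); the sampling error is absorbed into $r_{max}$ by choosing the target slightly below the allowed threshold, or by adding a tolerance term, after which the preceding argument applies verbatim. The secondary subtlety is the inner do-while loop that may shift $\vec{A_k'}$ when no feasible $b_k'$ exists: since this loop does not change the ratio target and always eventually outputs a cut with ratio in $[r_{min},r_{max}]$, it does not affect the count $l$ of outer iterations, and so the geometric decay of $|P_i\cap\mathbb{Z}^n|$ is preserved.
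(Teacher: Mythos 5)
Your argument is correct and matches the paper's own (very terse) justification, which simply observes that each $P_{i+1}$ nearly halves $P_i$: you formalize this as geometric decay $|P_{i+1}\cap\mathbb{Z}^n|\le r_{max}\,|P_i\cap\mathbb{Z}^n|$ and conclude $l-1\le \log|P_0\cap\mathbb{Z}^n|/\log(1/r_{max})$, which is exactly the intended reasoning. Your added caveat about the ratio being enforced only empirically is a point the paper itself glosses over, so your write-up is if anything slightly more careful than the original.
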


\subsection{Dynamic Stopping Criterion}

Approximating $|P \cap \mathbb{Z}^n|$ is factorized into approximating a series ratios $\frac{|P_{i+1} \cap \mathbb{Z}^n|}{|P_i \cap \mathbb{Z}^n|}$ by Equation\,(\ref{eqn:subdiv}).
Naturally, we could approximate ratios via $\frac{|P_{i+1} \cap S_i|}{|S_i|}$,
where $S_i$ is a set of lattice points sampled in $P_i$ by Algorithm\,\ref{alg:sample}.
A key question rises:
\begin{itemize}
\item How many sample points is sufficient to approximate $|P \cap \mathbb{Z}^n|$ with certain guarantees, like an $(\epsilon, \delta)$-bound?
\end{itemize}

Let $R_i$ denote the random variable of $\frac{|P_{i+1} \cap S_i|}{|S_i|}$,
and $R = \prod_{i=0}^{l-1}{R_i}$.
Note that $R_i$s are mutually independent, since for each $S_i$, the random walk starts from origin $\vec{0} \in T(P')$ (see Algorithm\,\ref{alg:sample}).
Thus we have the variance of $R$:
\begin{align}
\mathrm{Var}(R) & = \mathrm{Var}(\prod{R_i}) = \mathrm{E}((\prod{R_i})^2) - [\mathrm{E}(\prod{R_i})]^2 \nonumber \\
& = \prod{\mathrm{E}(R_i^2)} - [\prod\mathrm{E}(R_i)]^2 \nonumber \\
& = \prod{[\mathrm{Var}(R_i) + \mathrm{E}(R_i)^2]} - \prod{[\mathrm{E}(R_i)]^2}. \label{eqn:varr}
\end{align}
From Chebyshev inequality, we have:
\begin{align}
& \text{Prob}(|\frac{R - \mathrm{E}(R)}{E(R)}| \ge \epsilon) \le \frac{\mathrm{Var}(R)}{\epsilon^2 \cdot E(R)^2} \le \delta \nonumber \\
& \Rightarrow \mathrm{Var}(R) \le \delta \cdot \epsilon^2 \cdot E(R)^2. \label{eqn:bound}
\end{align}
Equation\,(\ref{eqn:bound}) shows when the approximate result lies in $[(1 - \epsilon)|P \cap \mathbb{Z}^n|, (1 + \epsilon)|P \cap \mathbb{Z}^n|]$ with probability at least $1 - \delta$,
i.e., satisfies an $(\epsilon, \delta)$-bound.
Thus we adopt Equation\,(\ref{eqn:bound}) as the stopping criterion of approximation.

Given a set of sample $S_i$, let $r_i = |P_{i+1} \cap S_i| / |S_i|$ and $r = \prod_{i=0}^{l-1}r_i$.
We use $r$ and $r_i$s to approximately represent $E(R_i)$s and $E(R)$ respectively (Lemma\,\ref{lem:bound2} shows that such substitutions are safe).
Then we split $S_i$ into $N$ groups $\{S_{ij}\}$ with the same size $s / \gamma$,
where $N = |S_i| \cdot \gamma / s$ is the number of groups.
Let $r_{ij} = |P_{i+1} \cap S_{ij}| / |S_{ij}|$ and $R_{ij}$ denote the random variable of $r_{ij}$.
If $R_{ij}$s are mutually independent and follow the same distribution, we have
\begin{align*}
\mathrm{Var}(R_i) & = \mathrm{Var}(\frac{\sum_{j=1}^N R_{ij}}{N}) = \frac{1}{N^2} \sum_{j=1}^N \mathrm{Var}(R_{ij}) \\
& = \frac{\mathrm{Var}(R_{i1})}{N} \approx \frac{1}{N} \sum_{j=1}^N \frac{(r_{ij} - r_i)^2}{N - 1}.
\end{align*}
Note that $R_{ij}$ can be exactly mutually independent if random walks start from a fixed point,
however, it is not actually necessary.
Let $v_i = \sum \frac{(r_{ij} - r_i)^2}{N(N - 1)}$.
As a result, an approximate stopping criterion is obtained
\begin{equation}
\mathrm{Var}(R) \approx \prod (v_i + r_i^2) - r^2 \le \delta \cdot \epsilon^2 \cdot r^2. \label{eqn:bound_approx}
\end{equation}

\begin{algorithm}[htb]
\caption{Approximate Lattice Counts}
\label{alg:alc}
\textbf{Input}: $P$ \\
\textbf{Parameter}: $\epsilon$, $\delta$, $s$, $\gamma$ \\
\textbf{Output}: $|P \cap \mathbb{Z}^n|$
\begin{algorithmic}[1] 
\STATE $(l, \{P_i\}, \{S_i\}) \leftarrow$ Subdivision($P$, $\mathit{s}$)
\STATE $N \leftarrow 0$
\STATE \textbf{do}
\begin{ALC@g}
    \STATE $N \leftarrow N + \gamma$
    \FOR {$i = 0$ to $l-1$}
        \STATE $S_i \leftarrow S_i \cup$ Sample($P_i$, $\mathit{s}$)
        \STATE $r_i \leftarrow |P_{i+1} \cap S_i| / |S_{i}|$
        \STATE Split $S_i$ into $N$ groups $S_{i1}, ..., S_{iN}$
        \STATE $r_{ij} \leftarrow |P_{i+1} \cap S_{ij}| / |S_{ij}|$, \quad $j \in \{1, ..., N\}$
        \STATE $v_i \leftarrow \sum_{j=1}^N \frac{(r_{ij} - r_i)^2}{N(N - 1)}$
    \ENDFOR
    \STATE $r \leftarrow \prod_{i=0}^{l-1}{r_i}$
    \STATE $v \leftarrow \prod_{i=0}^{l-1}{(v_i + r_i)^2} - r^2$
\end{ALC@g}
\STATE \textbf{while} $v \le \delta \cdot \epsilon^2 \cdot r^2$
\STATE \textbf{return} $|P_0 \cap \mathbb{Z}^n| \cdot r$
\end{algorithmic}
\end{algorithm}

The pseudocode of the main framework is presented as Algorithm\,\ref{alg:alc}.
It first generates $s$ sample points for each $P_i$ and then computes $r_i$s, $v_i$s, $r$ and $v$.
If Equation\,(\ref{eqn:bound_approx}) satisfies, it returns $|P_0 \cap \mathbb{Z}^n| \cdot r$,
otherwise, it repeats above steps.

\begin{figure*}[htb]
    \centering
    \begin{minipage}{0.32\linewidth}
        \centerline{\includegraphics[width=\textwidth]{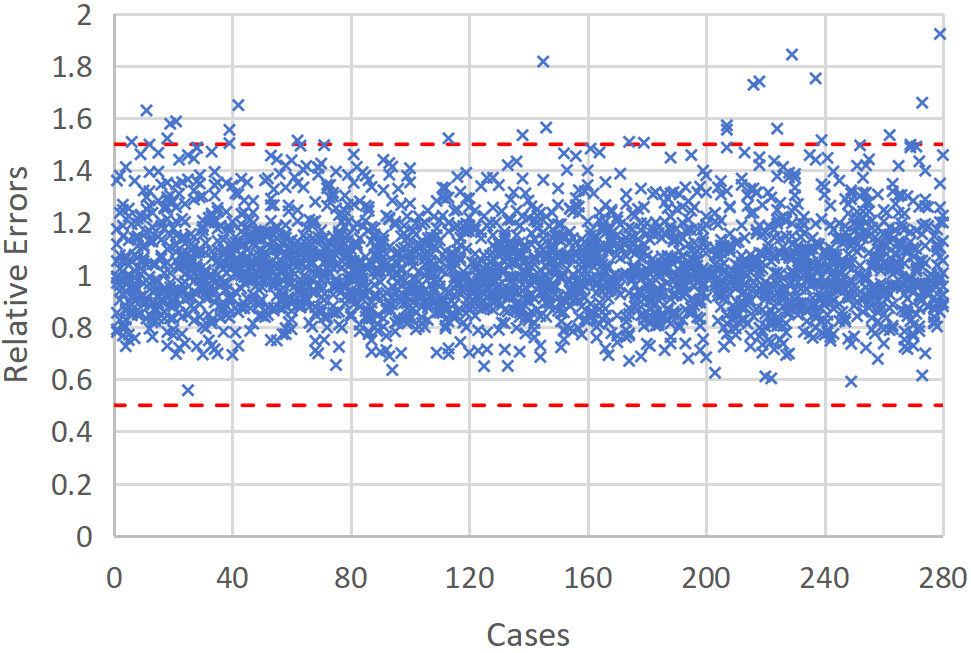}}
        \centerline{ (a) $\epsilon = 0.5$, $\delta = 0.1$}
    \end{minipage}
    \hspace{5pt}
    \begin{minipage}{0.32\linewidth}
        \centerline{\includegraphics[width=\textwidth]{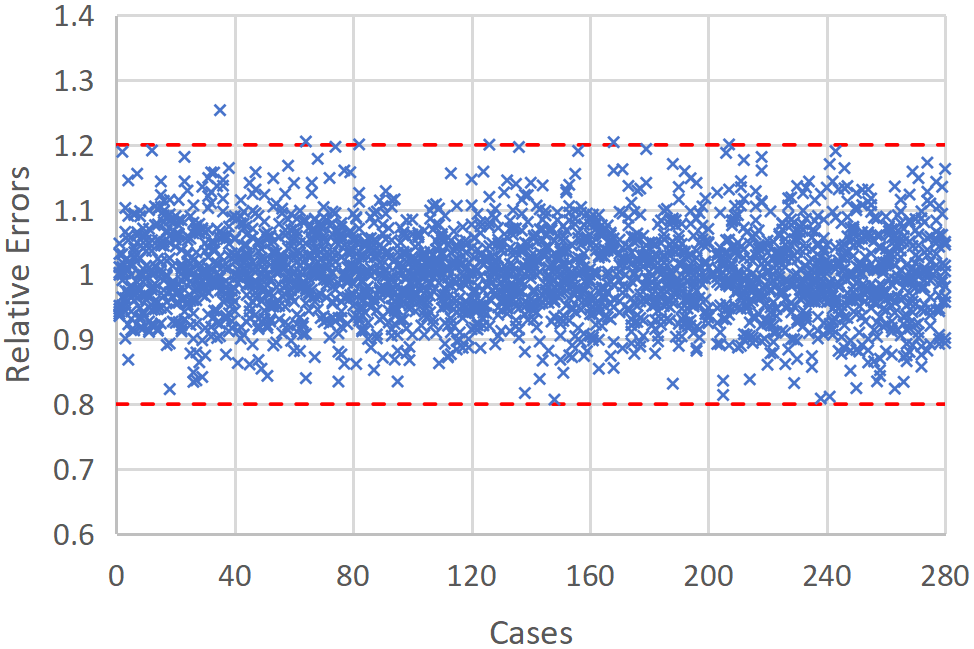}}
        \centerline{ (b) $\epsilon = 0.2$, $\delta = 0.1$}
    \end{minipage}
    \hspace{5pt}
    \begin{minipage}{0.32\linewidth}
        \centerline{\includegraphics[width=\textwidth]{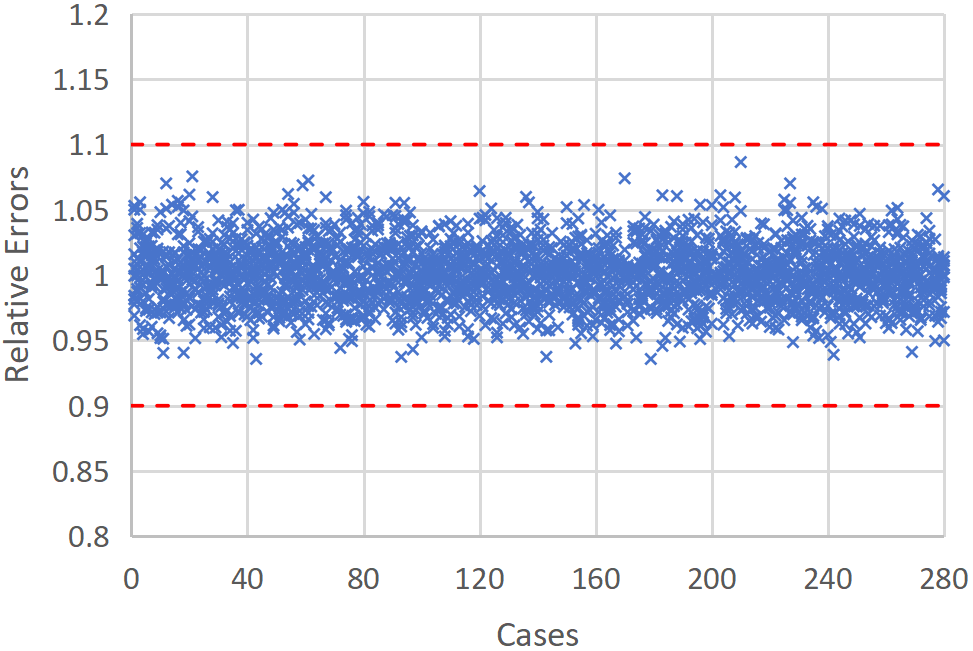}}
        \centerline{ (c) $\epsilon = 0.1$, $\delta = 0.05$}
    \end{minipage}
    \caption{
        Quality of counts computed by \textsc{ALC} with different $\epsilon$ and $\delta$
        on cases whose exact counts are available.
        Each case was experimented $10$ times, i.e., $10$ data points per case.
        The average running times in (a) (b) (c) are 0.19s, 0.81s, 5.73s respectively.
    }\label{fig:bound}
\end{figure*}

\begin{figure*}[htb]
\centering
    \begin{minipage}{0.355\linewidth}
        \centerline{\includegraphics[width=\textwidth]{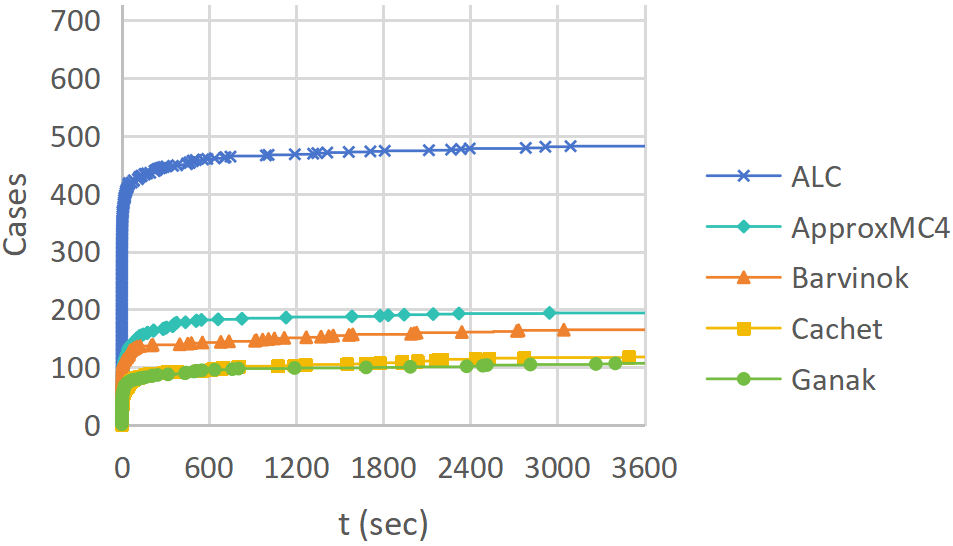}}
        \centerline{ (a) Random polytopes.\quad}
    \end{minipage}
    \hspace{2pt}
    \begin{minipage}{0.255\linewidth}
        \centerline{\includegraphics[width=\textwidth]{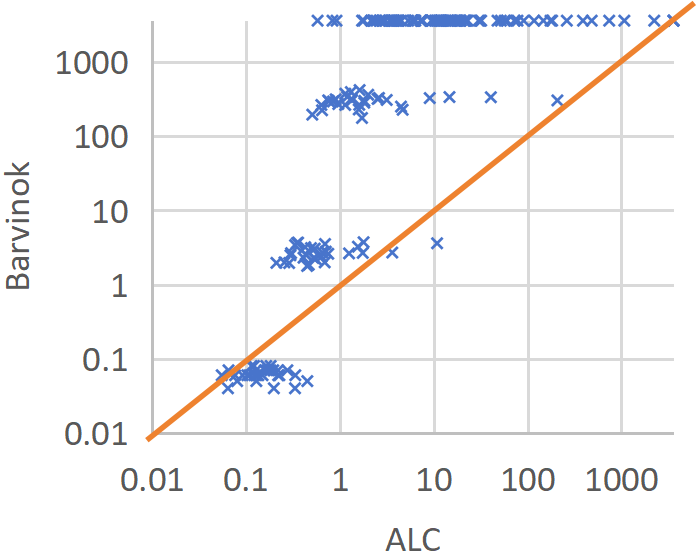}}
        \centerline{ (b) Rotated thin rectangles.}
    \end{minipage}
    \hspace{2pt}
    \begin{minipage}{0.355\linewidth}
        \centerline{\includegraphics[width=\textwidth]{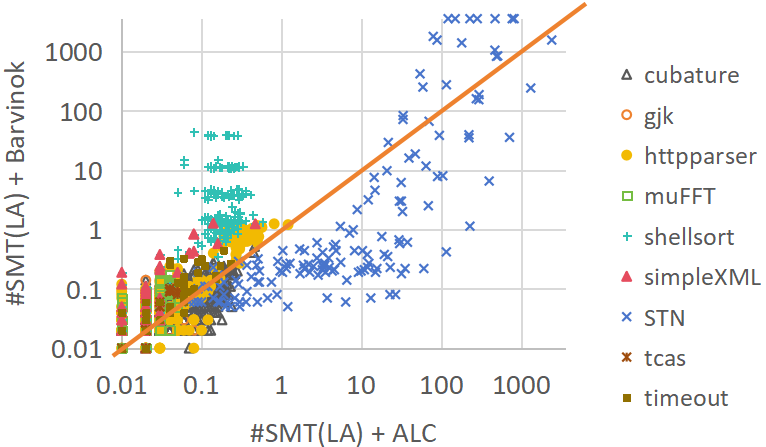}}
        \centerline{ (c) Application instances.}
    \end{minipage}
    \caption{Performance comparison among tools on different families of benchmarks.}\label{fig:comp}
\end{figure*}

\begin{lemma}
\label{lem:bound1}
$\lim_{|S_i|\rightarrow \infty} r_i = \frac{|P_{i+1} \cap \mathbb{Z}^n|}{|P_i \cap \mathbb{Z}^n|}$
and $\lim_{|S|\rightarrow \infty} r = \frac{|P \cap \mathbb{Z}^n|}{|P_0 \cap \mathbb{Z}^n|}$,
if Hit-and-run is a uniform sampler.
\end{lemma}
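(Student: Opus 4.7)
The plan is to reduce the lemma to the Strong Law of Large Numbers applied at each level of the chain, and then exploit the telescoping structure of the product in Equation~(\ref{eqn:subdiv}).

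First I would invoke Theorem~\ref{lem:uniform}: under the hypothesis that Hit-and-run is a uniform sampler, every point of $P_i \cap \mathbb{Z}^n$ is drawn by \textsc{Sample} with the same probability $1/|P_i \cap \mathbb{Z}^n|$. Consequently, the indicator $\mathbf{1}[\vec{q} \in P_{i+1}]$ for a single sample $\vec{q} \in S_i$ is a Bernoulli random variable with mean $|P_{i+1} \cap \mathbb{Z}^n|/|P_i \cap \mathbb{Z}^n|$, since $P_{i+1} \subset P_i$. Treating $r_i = |P_{i+1} \cap S_i|/|S_i|$ as the empirical mean of such indicators over the $|S_i|$ independent samples drawn at level $i$, the Strong Law of Large Numbers yields
\begin{equation*}
\lim_{|S_i| \to \infty} r_i \;=\; \frac{|P_{i+1} \cap \mathbb{Z}^n|}{|P_i \cap \mathbb{Z}^n|} \quad \text{almost surely},
\end{equation*}
which establishes the first claim.

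For the second claim, I would use the fact (already noted in the paper just before Equation~(\ref{eqn:varr})) that the random variables $R_0,\dots,R_{l-1}$ are mutually independent, because each $S_i$ is generated by a fresh random walk initialized from $\vec{0}$ in $T(P')$. The limit of the product is therefore the product of the limits, by the continuous mapping theorem applied to $(r_0,\dots,r_{l-1}) \mapsto \prod_i r_i$. Combined with Equation~(\ref{eqn:subdiv}), the product telescopes:
\begin{equation*}
\lim_{|S| \to \infty} r \;=\; \prod_{i=0}^{l-1} \frac{|P_{i+1} \cap \mathbb{Z}^n|}{|P_i \cap \mathbb{Z}^n|} \;=\; \frac{|P_l \cap \mathbb{Z}^n|}{|P_0 \cap \mathbb{Z}^n|} \;=\; \frac{|P \cap \mathbb{Z}^n|}{|P_0 \cap \mathbb{Z}^n|},
\end{equation*}
using $P_l = P$ from the termination of Algorithm~\ref{alg:subdiv}.

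There is no genuinely hard step here once Theorem~\ref{lem:uniform} is in hand; the only subtlety worth flagging is that the two limit statements involve different sample sizes $|S_i|$ and $|S| = \sum_i |S_i|$, so one must be explicit that sending $|S| \to \infty$ is understood in a way that sends each $|S_i| \to \infty$ (which is exactly what the outer loop of Algorithm~\ref{alg:alc} does, since every $S_i$ is enlarged by $s/\gamma$ samples per iteration). With that convention noted, the product-of-limits step is immediate.
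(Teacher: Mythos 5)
Your proposal is correct and takes essentially the same route as the paper, whose entire proof is the one-line observation that counting samples of a uniform sampler on $P_i$ that land in $P_{i+1}$ is a Bernoulli trial; you have simply made explicit the law-of-large-numbers step, the telescoping of the product over the chain, and the convention that $|S|\to\infty$ forces each $|S_i|\to\infty$. No gap.
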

\begin{proof}
Note that sampling uniform in $P_i$ and then count the number of samples in $P_{i+1}$ is a Bernoulli trial.
\end{proof}

\begin{lemma}
\label{lem:bound2}
Equation\,(\ref{eqn:bound}) and\,(\ref{eqn:bound_approx}) are approximately equivalent,
regardless of the difference between $r$ and $E(R)$.
\end{lemma}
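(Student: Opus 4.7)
The plan is to show that dividing both inequalities by the squared mean term on the right-hand side reduces them to an identical inequality on variance-to-mean-squared ratios, so the overall gap between $r$ and $E(R)$ cancels out. First, I would rewrite Equation~(\ref{eqn:bound}) by dividing both sides by $E(R)^2$ to get $\mathrm{Var}(R)/E(R)^2 \le \delta \epsilon^2$, and similarly rewrite Equation~(\ref{eqn:bound_approx}) by dividing both sides by $r^2$ to get $\prod(v_i + r_i^2)/r^2 - 1 \le \delta \epsilon^2$. The goal now is to show that the two left-hand sides are close and, more importantly, that no extra factor involving $r/E(R)$ appears in the comparison.

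Next I would exploit the product structure inherited from Equation~(\ref{eqn:varr}). Using $E(R) = \prod E(R_i)$ and $r = \prod r_i$, the LHS of the rescaled Equation~(\ref{eqn:bound}) becomes
\begin{equation*}
\frac{\mathrm{Var}(R)}{E(R)^2} \;=\; \prod_{i=0}^{l-1}\!\left(\frac{\mathrm{Var}(R_i)}{E(R_i)^2} + 1\right) - 1,
\end{equation*}
while the LHS of the rescaled Equation~(\ref{eqn:bound_approx}) becomes
\begin{equation*}
\frac{\prod (v_i + r_i^2)}{r^2} - 1 \;=\; \prod_{i=0}^{l-1}\!\left(\frac{v_i}{r_i^2} + 1\right) - 1.
\end{equation*}
Thus both inequalities have the same form $\prod (x_i + 1) - 1 \le \delta \epsilon^2$, where the per-phase coefficients $x_i$ are either true $\mathrm{Var}(R_i)/E(R_i)^2$ or their sample estimates $v_i/r_i^2$. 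Note that each ratio $v_i/r_i^2$ is scale-invariant in $r_i$, so any systematic bias between $r$ and $E(R)$ (which propagates as a common multiplicative factor across the $r_i$'s) does not appear here at all — this is precisely the "regardless" clause of the statement.

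Finally, I would argue that $v_i/r_i^2 \to \mathrm{Var}(R_i)/E(R_i)^2$ as $|S_i| \to \infty$, since $r_i$ is an unbiased Bernoulli-type estimator of $E(R_i)$ (by Lemma~\ref{lem:bound1}) and $v_i$ is the standard sample-variance-of-the-mean estimator when $R_{ij}$ are taken as i.i.d.\ as noted in the text preceding Equation~(\ref{eqn:bound_approx}). Combining the $l$ factors, the product $\prod(v_i/r_i^2 + 1)$ converges to $\prod(\mathrm{Var}(R_i)/E(R_i)^2 + 1)$, giving approximate equivalence of the two inequalities.

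The main obstacle I foresee is controlling the error from the product of $l$ noisy ratios: even if each individual factor $v_i/r_i^2$ is only slightly biased from $\mathrm{Var}(R_i)/E(R_i)^2$, these errors could compound multiplicatively over the $l = O(\log_2 |P_0 \cap \mathbb{Z}^n|)$ phases. I would handle this by noting that each $v_i/r_i^2$ is bounded (the $R_{ij}$ are Bernoulli averages in $[0,1]$ and the phases are chosen so that $r_i \in [r_{min}, r_{max}]$ is bounded away from $0$), so the compounded relative error on the product stays $O(l/\sqrt{s})$ and hence vanishes in the sample-size regime relevant to the algorithm — which is what "approximately equivalent" is intended to capture. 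The scale-invariance observation above is what removes the a priori dependence on $|r - E(R)|$ from the analysis.
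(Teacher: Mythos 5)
Your proposal is correct and is essentially the paper's own argument in a slightly cleaner normalized form: the paper introduces per-phase bias factors $c_i = r_i/E(R_i)$, notes $v_i \approx c_i^2\,\mathrm{Var}(R_i)$ while $r_i^2 = c_i^2\,E(R_i)^2$, and cancels the $c_i$'s through the product — which is exactly your observation that $v_i/r_i^2$ is scale-invariant and that both criteria reduce to $\prod(x_i+1)-1 \le \delta\epsilon^2$ with $x_i$ the true or estimated relative variances. The only difference is presentational (you divide each factor by $E(R_i)^2$ or $r_i^2$ up front rather than tracking the $c_i$'s explicitly), plus your added remarks on error compounding over $l$ phases, which the paper does not discuss.
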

\begin{proof}
Let $c = r / E(R)$ and $c_i = r_i / E(R_i)$ represent the differences.
Since $r_i = c_i \cdot E(R_i) = E(c_iR_i)$, we have
$$v_i \approx \mathrm{Var}(c_iR_i) = c_i^2 \cdot \mathrm{Var}(R_i).$$
Equation\,(\ref{eqn:bound}) can be transformed into
\begin{align}
\delta \cdot \epsilon^2 & \ge \frac{\prod(\mathrm{Var}(R_i) + E(R_i)^2)}{E(R)^2} - 1 \nonumber \\
& \approx \frac{c^2}{E(R)^2} \cdot \prod \frac{\mathrm{Var}(R_i) + E(R_i)^2}{c_i^2} - 1 \nonumber \\
& = \frac{\prod (v_i + r_i^2)}{r^2} - 1. \label{eqn:bound_diff}
\end{align}
Note that Equation\,(\ref{eqn:bound_diff}) is the same as Equation\,(\ref{eqn:bound_approx}).
\end{proof}

From Lemma\,\ref{lem:bound1}, \ref{lem:bound2} and Equation\,(\ref{eqn:bound_approx}), we have
\begin{theorem}
\label{thm:bound2}
The output of Algorithm\,\ref{alg:alc} is approximately bounded in an $(\epsilon, \delta)$-bound.
\end{theorem}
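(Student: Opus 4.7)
The plan is to assemble the three ingredients already established: Chebyshev's inequality applied to the product estimator $R = \prod_{i=0}^{l-1} R_i$, Lemma~\ref{lem:bound2} which bridges the theoretical criterion Equation~(\ref{eqn:bound}) and the sample-based criterion Equation~(\ref{eqn:bound_approx}) actually tested inside Algorithm~\ref{alg:alc}, and Lemma~\ref{lem:bound1} which identifies $E(R)$ with the true ratio $|P \cap \mathbb{Z}^n|/|P_0 \cap \mathbb{Z}^n|$.

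First I would observe that $R_0,\ldots,R_{l-1}$ are mutually independent, as stressed in the derivation of Equation~(\ref{eqn:varr}): each invocation of Sample$(P_i, s)$ restarts the Hit-and-run walk from the origin, so the sample sets $S_i$ for different $i$ are produced by independent random processes. Independence allows $\mathrm{Var}(R)$ to factor exactly as in Equation~(\ref{eqn:varr}), and Chebyshev's inequality then yields the implication of Equation~(\ref{eqn:bound}): a sample realization $r$ lies in $[(1-\epsilon) E(R), (1+\epsilon) E(R)]$ with probability at least $1-\delta$ whenever the theoretical bound $\mathrm{Var}(R) \le \delta \epsilon^2 E(R)^2$ holds.

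Next, I would invoke Lemma~\ref{lem:bound1} to identify $E(R)$ with the true ratio $|P \cap \mathbb{Z}^n|/|P_0 \cap \mathbb{Z}^n|$ in the large-sample limit. Since $|P_0 \cap \mathbb{Z}^n|$ is computed exactly as the lattice count of a rectangular box, multiplying through shows that the output $|P_0 \cap \mathbb{Z}^n| \cdot r$ of Algorithm~\ref{alg:alc} lies in $[(1-\epsilon)|P \cap \mathbb{Z}^n|,\,(1+\epsilon)|P \cap \mathbb{Z}^n|]$ with probability at least $1-\delta$, provided the theoretical stopping criterion holds. The remaining step is to transfer the guarantee from the theoretical criterion Equation~(\ref{eqn:bound}) to the sample-based criterion Equation~(\ref{eqn:bound_approx}) actually checked by the algorithm; this is precisely the content of Lemma~\ref{lem:bound2}, whose proof shows that replacing $E(R_i),\mathrm{Var}(R_i)$ by $r_i,v_i$ perturbs the inequality only by multiplicative factors $c_i = r_i/E(R_i)$ that cancel between the two sides. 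Hence when Algorithm~\ref{alg:alc} terminates, Equation~(\ref{eqn:bound_approx}) implies Equation~(\ref{eqn:bound}) approximately, and the conclusion follows by chaining the three steps.

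The main obstacle, and the source of the qualifier \emph{approximately} in the statement, is that the argument relies in several places on sample quantities being close to their large-sample limits: $r_i$ approximates $E(R_i)$ only in the limit asserted by Lemma~\ref{lem:bound1}, $v_i$ approximates $\mathrm{Var}(R_i)$ only up to $O(1/N)$ sample noise, and Hit-and-run itself is nearly uniform rather than exactly uniform. A fully quantitative proof would have to bound each of these discrepancies explicitly and absorb them into $\epsilon$ and $\delta$, for instance via a union bound over the $l$ phases together with a mixing-time estimate for the walk. Because the theorem is stated only with the qualifier ``approximately'', however, the proof need not grind through those error bounds; it suffices to chain Chebyshev with Lemmas~\ref{lem:bound1} and~\ref{lem:bound2} as above.
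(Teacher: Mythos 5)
Your proposal is correct and follows essentially the same route as the paper, which simply cites Lemma~\ref{lem:bound1}, Lemma~\ref{lem:bound2}, and Equation~(\ref{eqn:bound_approx}) and leaves the chaining implicit; you have merely spelled out how Chebyshev's inequality, the identification of $E(R)$ with the true ratio, and the substitution of sample quantities combine. Your closing remarks on the unquantified sources of approximation accurately explain why the theorem carries the qualifier ``approximately.''
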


\subsection{Implementation Details}\label{sect:imple}

The setting of parameters in Algorithm\,\ref{alg:sample},~\ref{alg:subdiv} and\,\ref{alg:alc} are listed with explanations as the following:
\begin{itemize}
\item {$\epsilon = 0.2$ and $\delta = 0.1$.}
    They determine the bounds of counts computed by our approach.
    Experimental results with more pairs of values, such as $(0.5, 0.1)$ and $(0.1, 0.05)$, can be found in Section\,\ref{sect:eval}.
\item {$w = n$.}
    It controls the number of Hit-and-run walks per real sample point.
    Earlier theoretical results\,\cite{LovaszV06b} showed the upper bounds on $w$ in the Markov chain is $O(n^2)$ for a random initial point and $O(n^3)$ for a fixed initial point.
    However, further numerical studies\,\cite{LovaszD12,GeMZZ18} reported that $w = n$ is sufficient on polytopes with dozens of dimensions.
    They also tried $w = 2n$ and $w = n \ln n$, but no visible improvement.
    Thus we adopt $w = n$.
\item {$s = 2 / (\delta \cdot \epsilon^2)$:}
    It controls the number of samples in one round.
    We select this value, as $1 / (\delta \cdot \epsilon^2)$ uniform samples are sufficient to approximate $r_i$ in $(\epsilon, \delta)$-bound.
    Note that the total number of samples is determined by the stopping criterion instead of $s$.
\item {$r_{min} = 0.4$, $r_{max} = 0.6$, $\mu = 0.005$ and $\gamma = 10$.}
\end{itemize}

In Algorithm\,\ref{alg:subdiv}, $P_0 = Rect(P)$ can be easily computed by LP or ILP.
Naturally, LP is cheaper than ILP, but the rectangle generated by ILP is smaller.
In practice, the cost of ILP is usually negligible compared to entire counting algorithm.

In Algorithm\,\ref{alg:subdiv} and\,\ref{alg:alc},
samples in $S_i \cap P_{i+1}$ can be reutilized in $S_{i+1}$.
Thus we only have to generate $s - |S_i \cap P_{i+1}|$ new samples for $S_{i+1}$.
\cite{GeMZZ18} proved that this technique has no side-effect on errors for approximating ratios.

\begin{table*}[htb]
\scriptsize\centering
\setlength\tabcolsep{4pt}
\begin{tabular}{|c|c|c|c|c|c|c|c|c|c|c|c|c|c|c|c|c|c|c|c|c|c|}
    \hline
    & Dim. $n$ & 3 & 4 & 5 & 6 & 7 & 8 & 9 & 10 & 11 & 12 & 13 & 14 & 15 & 20 & 30 & 40 & 50 & 60 & 70 & 80 \\
    \hline
    \multirow{3}{*}{ALC} & \#solved & \bf{33} & \bf{33} & \bf{33} & \bf{33} & \bf{33} & \bf{33} & \bf{33} & \bf{33} & \bf{33}
                        & \bf{30} & \bf{31} & \bf{29} & \bf{28} & \bf{22} & \bf{14} & \bf{11} & \bf{10} & \bf{5} & \bf{4} & \bf{1} \\
    & avg. $\bar{t}$ & \bf{0.03} & \bf{0.05} & \bf{0.07} & \bf{0.19} & \bf{0.65} & \bf{0.50} & \bf{2.42} & \bf{85.6} & \bf{55.5}
                        & \bf{42.6} & \bf{138} & \bf{48.8} & \bf{151} & \bf{156} & \bf{286} & \bf{249} & \bf{1057} & \bf{515} & \bf{1684} & \bf{3090} \\
    & avg. $\bar{l}$ & 1.6 & 3.2 & 4.2 & 6.1 & 7.4 & 8.2 & 9.5 & 12.3 & 13.6 & 13.4 & 14.4 & 15.8 & 17.6 & 21.6 & 24.7 & 31.6 & 40.4 & 31.4 & 40.3 & 44 \\
    \hline
    \multirow{2}{*}{Barvinok} & \#solved & \bf{33} & \bf{33} & \bf{33} & \bf{33} & 22 & 11 & 0 & 0 & 0 & 0 & 0 & 0 & 0 & 0 & 0 & 0 & 0 & 0 & 0 & 0 \\
    & avg. $\bar{t}$ & 0.22 & 0.24 & 2.72 & 105 & 1052 & 1158 & --- & --- & --- & --- & --- & --- & --- & --- & --- & --- & --- & --- & --- & --- \\
    \hline
    \multirow{2}{*}{Cachet} & \#solved
                     & 27   & 18    & 17    & 13    & 11    & 9     & 6     & 6     & 4    & 3    & 2    & 2    & 0 & 0 & 0 & 0 & 0 & 0 & 0 & 0 \\
    & avg. $\bar{t}$ & 161  & 71.8  & 537   & 396   & 291   & 434   & 153   & 601   & 735  & 483  & 621  & 2159 & --- & --- & --- & --- & --- & --- & --- & --- \\
    \hline
    \multirow{2}{*}{Ganak} & \#solved & 25 & 17 & 13 & 11 & 9 & 8 & 7 & 5 & 3 & 3 & 3 & 3 & 0 & 0 & 0 & 0 & 0 & 0 & 0 & 0 \\
    & avg. $\bar{t}$ & 68.7 & 256 & 9.4 & 187 & 27.6 & 198 & 459 & 169 & 59.2 & 390 & 1796 & 2909 & --- & --- & --- & --- & --- & --- & --- & --- \\
    \hline
    \multirow{2}{*}{ApproxMC4} & \#solved & \bf{33} & \bf{33} & 32 & 22 & 19 & 13 & 10 & 10 & 6 & 5 & 4 & 4 & 3 & 0 & 0 & 0 & 0 & 0 & 0 & 0 \\
    & avg. $\bar{t}$ & 1.16 & 9.78 & 81.3 & 98.4 & 136 & 121 & 360 & 580 & 287 & 608 & 352 & 673 & 1001 & --- & --- & --- & --- & --- & --- & --- \\
    \hline
\end{tabular}
\caption{More statistics of performance on random polytopes with respect to $n$ ($33$ cases for each $n$, experiment once per case).}\label{table:comp}
\end{table*}

\section{Evaluation}\label{sect:eval}

We implemented a prototype tool called \textsc{ApproxLatCount} (\textsc{ALC})
\footnote{Source code of \textsc{ALC} and experimental data including benchmarks can be found at https://github.com/bearben/ALC.} in C++.
Furthermore, we integrated \textsc{ALC} into a DPLL(T)-based \#SMT(LA) counter\,\cite{GeMZZ18}.
Experiments were conducted on Intel(R) Xeon(R) Gold $6248$ @ $2.50$GHz CPUs
with a time limit of 3600 seconds and memory limit of 4 GB per benchmark.
The setting of parameters of \textsc{ALC} has already been presented and discussed in Section\,\ref{sect:imple}.
The benchmark set consists of three parts:
\begin{itemize}
\item{Random Polytopes:}
We generated $726$ random polytopes with three parameters $(m, n, \lambda)$,
where $n$ ranges from $3$ to $100$, $m \in \{n/2, n, 2n\}$ and	$\lambda \in \{2^0, 2^1, ..., 2^{10}\}$.
A benchmark is in the form of $\{A\vec{x} \le \vec{b}, -\lambda \le x_i \le \lambda\}$,
where $a_{ij} \in [-10, 10] \cap \mathbb{Z}$ and $b_i \in [-\lambda, \lambda] \cap \mathbb{Z}$.
\item{Rotated Thin Rectangles:} To evaluate the quality of approximations on ``thin'' polytopes,
$180$ $n$-dimensional rectangles $\{-1000 \le x_1 \le 1000, -\tau \le x_i \le \tau, i \ge 2\}$ were generated
and then rotated randomly, where $n \in \{3, ..., 8\}$ and $\tau \in \{0.1, 0.2, ..., 2.9, 3.0\}$.
\item{Application Instances:}
We adopted $4131$ benchmarks \cite{GeB21} from program analysis and simple temporal planning.
The domain of variables is $[-32, 31]$.
\end{itemize}

We compared our tool \textsc{ALC} with the state-of-the-art integer counter \textsc{Barvinok}\,\cite{VerdoolaegeSBLB07}.
On random polytopes, we further compared our approach with the state-of-the-art propositional model counters
\textsc{ApproxMC4}\,\cite{SoosM19},
\textsc{Cachet}\,\cite{SangBBKP04},
and \textsc{Ganak}\,\cite{SharmaRSM19}.
We used the default settings of \textsc{ApproxMC4} ($\epsilon = 0.8$, $\delta = 0.2$) and \textsc{Ganak} ($\delta = 0.05$).
Note that they require CNF formulas as inputs.
Thus we first translated linear constraints into bit-vector formulas,
and then translated into propositional CNF with \textsc{Boolector}\,\cite{NiemetzPWB18}.
Translation times are not included in the running times.

Figures\,\ref{fig:bound}\,(a)\,(b)\,(c) show the relative errors (y-axis) of counts computed by \textsc{ALC} with different $(\epsilon, \delta)$ settings.
The experiments were conducted on random polytopes (case $91 \sim 280$) and rotated thin rectangles (case $1 \sim 90$) whose exact counts could be obtained by \textsc{Barvinok} and \textsc{Cachet}.
We run \textsc{ALC} $10$ times on each cases.
So there are $10$ data points per case, $2800$ data points per figure.
We observe that the counts computed by \textsc{ALC} are bounded well.
For example, in Figure\,\ref{fig:bound}\,(b), relative errors should lie in $[0.8, 1.2]$ with probability at least $90\%$ with $\epsilon = 0.2$, $\delta = 0.1$.

Figures\,\ref{fig:comp}\,(a)\,(b)\,(c) compare running times among tools on different families of benchmarks.
In general, \textsc{ALC} significantly outperforms other tools.
On random polytopes, more results with respect to $n$ are listed in Table\,\ref{table:comp}, which will be discussed later.
Figure\,\ref{fig:comp}\,(b) present the results on rotated thin rectangles.
Note that none of cases in this family was solved in timeout by \textsc{ApproxMC4}, \textsc{Cachet} or \textsc{Ganak},
due to larger coefficients and variable domains.
We observe jumps regarding running times of \textsc{Barvinok}, as $n$ increases.
Figure\,\ref{fig:comp}\,(c) compares on application instances
which are all SMT(LA) formulas.
Since we only integrated \textsc{ALC} and \textsc{Barvinok} into the \#SMT(LA) counter,
we did not compare with other tools.
Note that `STN' is the family of simple temporal planning benchmarks, others are all generated by analyzing C++ programs.
We find that most benchmarks were solved in one second by both tools, except `shellsort' and `STN'.
On `shellsort', \textsc{ALC} significantly outperforms \textsc{Barvinok}.
On `STN', \textsc{ALC} eventually gains upper hand as the dimensionality increases.

Table\,\ref{table:comp} lists the number of solved cases and average running times (exclude timeout cases) with respect to $n$.
For each $n$, there are $33$ benchmarks.
We find that \textsc{ALC} could handle random polytopes up to $80$ dimensions.
``Avg. $\bar{l}$'' means the average length of polytopes chain (exclude timeout cases), which grows nearly linear.
Note that \textsc{ApproxMC4}, \textsc{Cachet} and \textsc{Ganak} could solve cases with more variables (max to $15$) than \textsc{Barvinok} here,
due to benchmarks with $\lambda = 1$, i.e., $-1 \le x_i \le 1$, which are in favor of propositional model counters.

\section{Related Works}\label{sect:relate}

There are a few related works which also investigate approximate integer solution counting problem.
In \cite{KannanV97}, an algorithm for sampling lattice points in a polytope was introduced.
Similar to Algorithm\,\ref{alg:sample}, it considers an enlarged polytope $P''$ for real points sampling and then rejects samples outside $P$,
where $$P'' = \{\vec{x} : \vec{A_i}\vec{x} \le b_i + (c + \sqrt{2\log m})|\vec{A_i}|)\},$$
$c=\sqrt{\ln \frac{4}{\varepsilon}}$ and $\varepsilon$ is the variational difference between the uniform density and the probability density of real points sampling.
As a result, they proved that there exists a polynomial time algorithm for nearly uniform lattice sampling if $b_i \in \Omega(n\sqrt{m}|\vec{A_i}|)$.
However, in practice, such condition is often too loose.
For example, benchmarks considered in Section~\ref{sect:eval} are usually smaller, i.e., $b_i < n\sqrt{m}|\vec{A_i}|$, especially when $n \ge 10$,
which has a higher difficulty in sampling.
Also note that $P'$ computed by our approach is tighter than $P''$.
Thus the probability of rejection by sampling in $P'$ is lower than in $P''$.
In addition, back to the time of this work published, the best real points sampler is only with time complexity of $O^*(n^5)$.
Nowadays, the state-of-the-art real points sampler is in $O^*(n^3)$.

A more recent work \cite{GeB21} introduced factorization preprocessing techniques to reduce polytopes dimensionality.
Suppose a polytope $P$ has been factorized into $F_1, ..., F_k$, and $|P \cap \mathbb{Z}^n| = \prod_{i=1}^k |F_i \cap \mathbb{Z}^{n_i}|$,
where $n_i$ represents the dimensionality of $F_i$.
To approximate $|P \cap \mathbb{Z}^n|$ with given $\epsilon, \delta$,
we have to approximate counts in $F_i$ with smaller $\epsilon', \delta'$.
It indicates that factorization techniques integrated with \textsc{ALC} may not as effective as with exact counters.

\section{Conclusion}\label{sect:conclude}

In this paper, a new approximate lattice counting framework is introduced,
with a new lattice sampling method and dynamic stopping criterion.
Experimental results show that our algorithm significantly outperforms the state-of-the-art counters, with low errors.
Since our sampling method is limited by the Hit-and-run random walk,
which is only a nearly uniform sampler,
we are interested in an efficient method to test the uniformity of samplers in the future.

\section*{Acknowledgments}

This work is supported by National Key R\&D Program of China (2022ZD0116600).
Cunjing Ge is supported by the National Natural Science Foundation of China (62202218),
and is sponsored by CCF-Huawei Populus Grove Fund (CCF-HuaweiFM202309).

\bibliography{aaai24}

\end{document}